\newtheorem{theorem}{Theorem}
\begin{document}

\title{Soft Covering with High Probability}

\author{
\IEEEauthorblockN{Paul Cuff}
\IEEEauthorblockA{Princeton University}
}

\maketitle

\begin{abstract}
Wyner's soft-covering lemma is the central analysis step for achievability proofs of information theoretic security, resolvability, and channel synthesis.  It can also be used for simple achievability proofs in lossy source coding.  This work sharpens the claim of soft-covering by moving away from an expected value analysis.  Instead, a random codebook is shown to achieve the soft-covering phenomenon with high probability.  The probability of failure is super-exponentially small in the block-length, enabling many applications through the union bound.  This work gives bounds for both the exponential decay rate of total variation and the second-order codebook rate for soft covering.
\end{abstract}

\section{Soft Covering}

Soft covering of a distribution by a codebook is a concept that was introduced by Wyner \cite[Theorem~6.3]{wyner-common-info}.  He developed this tool for the purpose of proving achievability in his work on the common information of two random variables.  Coincidentally, the most prevalent current application of soft covering is for security proofs in wiretap channels (e.g. \cite{bloch-laneman13}), which he also introduced that same year in \cite{wyner-wiretap} but apparently did not see how soft covering applied.

We will focus exclusively on the memoryless case, as did Wyner.  Given a channel $Q_{Y|X}$ and an input distribution $Q_X$, let the output distribution be $Q_Y$.  Also, let the $n$-fold memoryless extensions of these be denoted $Q_{Y^n|X^n}$, $Q_{X^n}$, and $Q_{Y^n}$.

Wyner's soft-covering lemma says that the distribution induced by selecting a $X^n$ sequence at random from a codebook of sequences and passing it through the memoryless channel $Q_{Y^n|X^n}$ will be a good approximation of $Q_{Y^n}$ in the limit of large $n$ as long as the codebook is of size greater than $2^{nR}$ where $R > I(X;Y)$.  In fact, the codebook can be chosen quite carelessly---by random codebook construction, drawing each sequence independently from the distribution $Q_{X^n}$.

Some illustrations of this phenomenon can be found in Figures~1-6.  Here we demonstrate the synthesis of a Gaussian distribution by applying additive Gaussian noise to a codebook.  The solid curve in Figures~\ref{fig:1d5points}, \ref{fig:1d5points_optimized}, \ref{fig:1d32points}, and \ref{fig:1d32points_optimized} is the desired output distribution, while the dashed curves are the approximations induced by the codebooks.  Along the bottom of each graphic the codewords themselves are illustrated with an `x.'  In Figures~\ref{fig:2d25points} and \ref{fig:2d1024points} for the 2-dimensional case, only the distributions induced by the codewords are shown in gray-scale.

The signal-to-noise-ratio is 15 for these examples, meaning that the variance of the desired output distribution is 16 times that of the additive noise.  This gives a mutual information of 2 bits.  Accordingly, if the codebook size is $b^n$, where $b>4$ and $n$ is the dimension, then the output distribution should become a very close match as the dimension increases.  We show two cases:  $b=5$ (i.e. rate of $\log 5$ per channel use) and $b=32$ (i.e. rate of 5 bits per channel use).  Both rates are sufficient asymptotically for soft covering.

In Figure~\ref{fig:1d5points} we see that five randomly chosen codewords does a poor job of approximating the desired output distribution.  Figure~\ref{fig:2d25points} shows the 2-dimensional version of the $b=5$ example.  It's still a poor approximation.  On the other hand, the $b=32$ example shown in Figure~\ref{fig:1d32points} and Figure~\ref{fig:2d1024points} begins to look quite good already in two dimensions.  The benefit of the increased dimension seems apparent, as the distribution is able to have a smoother appearance.  This same benefit will ultimately occur in the $b=5$ case as well, but it will require a higher dimension to manifest itself.

One might also like to consider how good the approximation can be if the codebook is chosen carefully rather than at random.  Figure~\ref{fig:1d5points_optimized} and Figure~\ref{fig:1d32points_optimized} show this for the two cases in one dimension.  We see that while five points is not enough for a great approximation, it can do about as well as 32 randomly chosen codewords.  Also, the 32 carefully placed codewords induce an excellent approximation.  The study of the best codebooks was performed under the name ``resolvability'' in \cite{han-verdu}.  Even though careful placement obviously helps in the cases illustrated, the effect wears off in high dimensions, and you still cannot use a codebook rate $R < I(X;Y)$.  Soft covering theorems, as described herein, exclusively consider random codebooks.  The intended applications are coding theorems in communication settings where often random codebook generation is convenient.

\section{Literature}

The soft-covering lemmas in the literature use a distance metric on distributions (commonly total variation or relative entropy) and claim that the distance between the induced distribution $P_{Y^n}$ and the desired distribution $Q_{Y^n}$ vanishes in expectation over the random selection of the set.\footnote{Many of the theorems only claim existence of a good codebook, but all of the proofs use expected value to establish existence.}  In the literature, \cite{han-verdu} studies the fundamental limits of soft-covering as ``resolvability,'' \cite{hayashi06} provides rates of exponential convergence, \cite{cuff13} improves the exponents and extends the framework, \cite{ahlswede-winter02} and \cite[Chapter~16]{wilde-text} refer to soft-covering simply as ``covering'' in the quantum context, \cite{winter05} refers to it as a ``sampling lemma'' and points out that it holds for the stronger metric of relative entropy, and \cite{hou-kramer14} gives a recent direct proof of the relative entropy result.  A covering lemma found in \cite{bennett-etal14} makes a high probability claim similar to this work; however, it is different in that it only applies to what is referred therein as ``unweighted'' channels.  They then use this to make very strong claims about channel synthesis for general channels, even for worst-case channel inputs.  That covering lemma is closely related to this at a high level but technically quite different.

\begin{figure}
    \centering
    \includegraphics[width=0.4\textwidth]{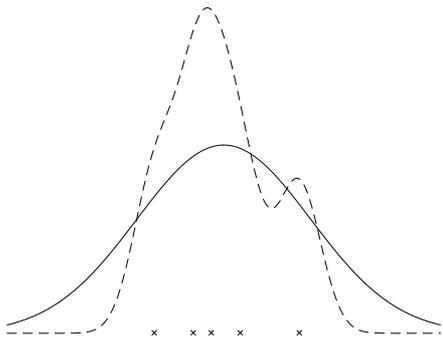}
    \caption{5 randomly selected codewords}
    \label{fig:1d5points}
\end{figure}

\begin{figure}
    \centering
    \includegraphics[width=0.4\textwidth]{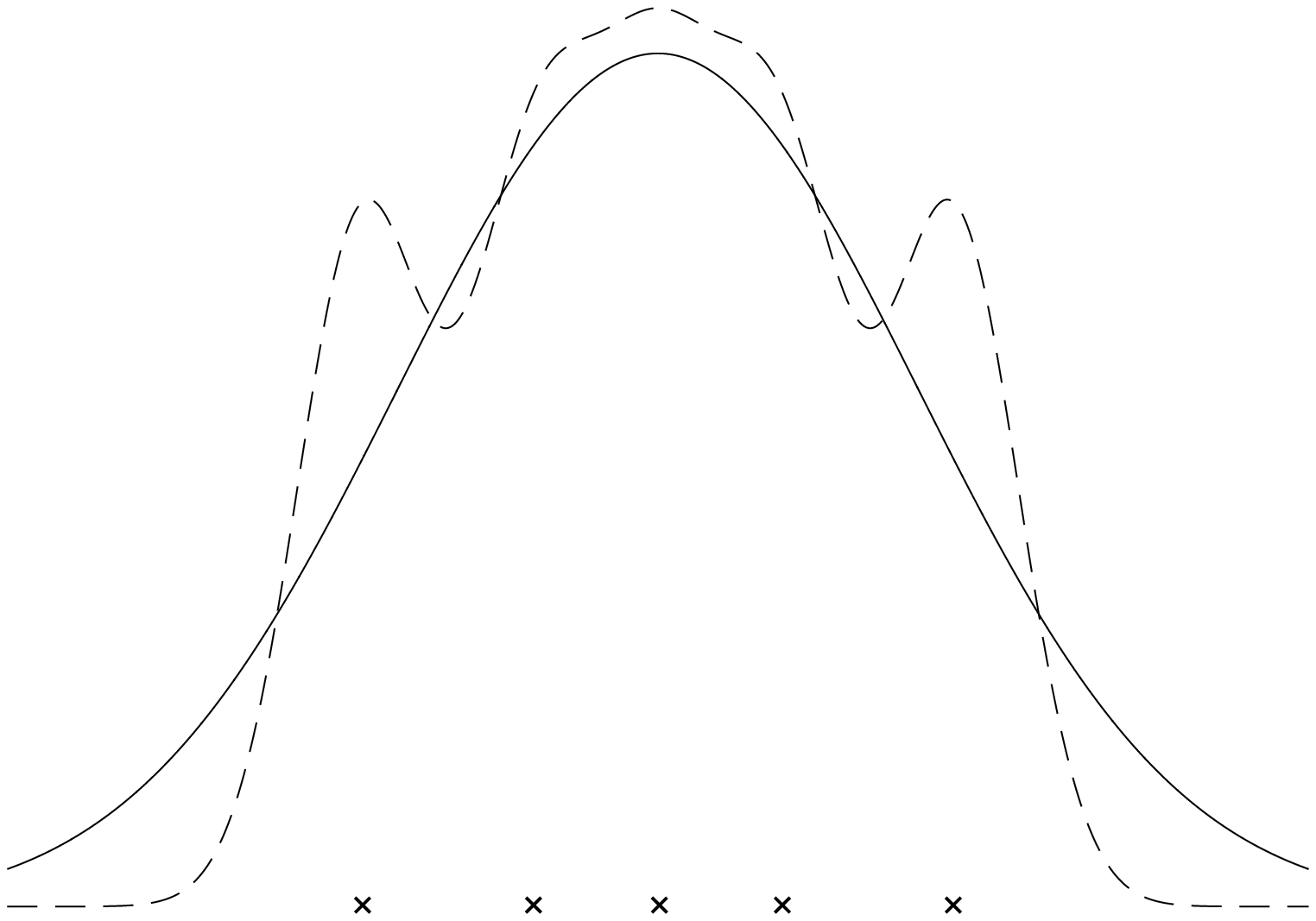}
    \caption{5 carefully selected codewords}
    \label{fig:1d5points_optimized}
\end{figure}

\begin{figure}
    \centering
    \includegraphics[width=0.4\textwidth]{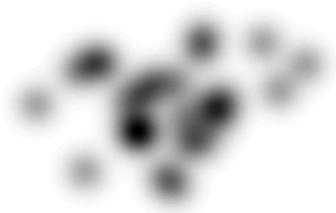}
    \caption{$25 = 5^2$ randomly selected codewords in 2 dimensions}
    \label{fig:2d25points}
\end{figure}

\begin{figure}
    \centering
    \includegraphics[width=0.4\textwidth]{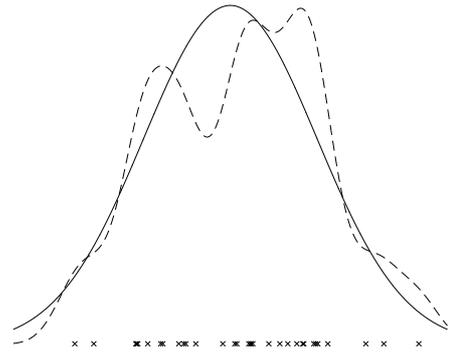}
    \caption{32 randomly selected codewords}
    \label{fig:1d32points}
\end{figure}

\begin{figure}
    \centering
    \includegraphics[width=0.4\textwidth]{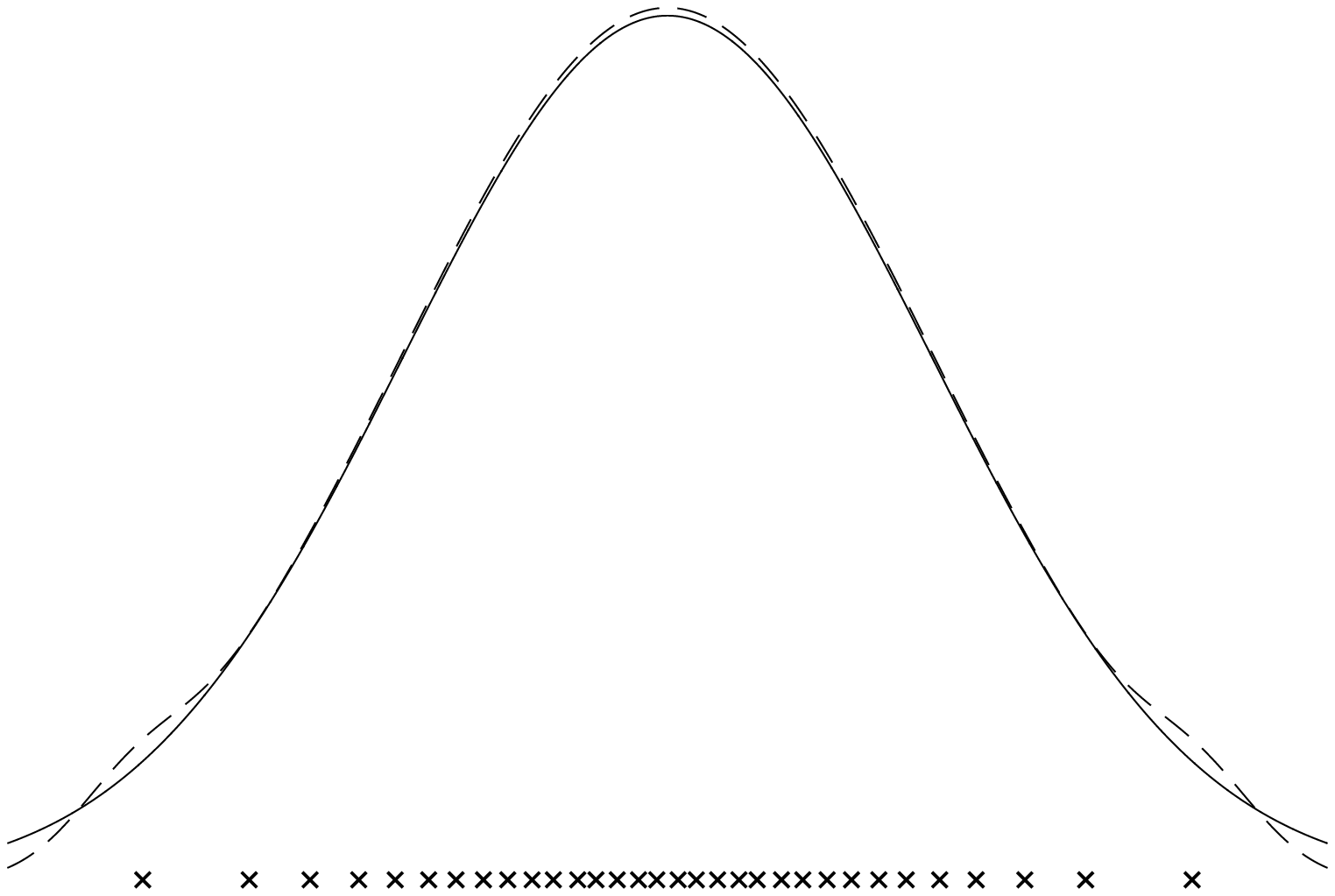}
    \caption{32 carefully selected codewords}
    \label{fig:1d32points_optimized}
\end{figure}

\begin{figure}
    \centering
    \includegraphics[width=0.4\textwidth]{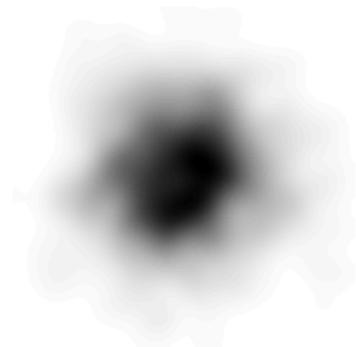}
    \caption{$1024 = 32^2$ randomly selected codewords in 2 dimensions}
    \label{fig:2d1024points}
\end{figure}

Here we give two incarnations of a stronger claim.  First, with high probability with respect to the codebook distribution, for any fixed rate $R > I(X;Y)$, the total variation distance will vanish exponentially quickly with the block-length $n$.  Second, if the codebook rate exceeds $I(X;Y)$ by a vanishing amount of order $\frac{1}{\sqrt{n}}$ (referred to as the second-order rate), then the total variation is bounded by a constant with high probability.  In both cases, the negligible probability of the random codebook not producing these desired results is super-exponentially small.

Both of the results provided in this work have matching results in the literature, but in a weaker form.  For example, the same second-order rate provided in this work was shown in \cite{watanabe-hayashi14} using a bound found in \cite{hayashi06}.  The difference is that the present work shows that a random codebook will achieve this same phenomenon at the same efficient rates with extremely high probability.  Previous results only made claims about the best codebook via expected value arguments.  As we will claim in Section~\ref{section applications}, the high probability aspect of these results is crucial for solving certain problems that were previously challenging.

The results presented in this paper are highly related to results we presented in \cite{cuff15} and \cite{goldfeld-cuff-permuter15}.  In this paper we prove the high probability results directly for the total variation metric (instead of relative entropy), which yields tighter results by a factor of two in the exponent than going through Pinsker's inequality.  Furthermore, we provide second-order rate results not present in any previous publications.

\section{Main Results}

Let us define precisely the induced distribution.  Let $C = \{x^n(m)\}_{m=1}^M$ be the codebook.  Then the induced distribution of $Y^n$, which is a function of the codebook $C$, is the conditional distribution
\begin{align}
    P_{Y^n|C} &= 2^{-nR} \sum_{x^n(m) \in C} Q_{Y^n|X^n=x^n(m)}.
\end{align}

The codebook itself is randomly generated, with the codewords mutually independent and distributed according to
\begin{equation}
    x^n(m) \sim Q_{X^n} \quad \forall m.
\end{equation}
We denote the random codebook with calligraphic text,  as ${\cal C}$.  Thus, $P_{Y^n|C={\cal C}}$ is a random distribution because of the random codebook selection.  For notational brevity, we will refer to this simply as $P_{Y^n|{\cal C}}$.

The size of the codebook is $M = 2^{nR}$.  In the case of Theorem~\ref{theorem second order}, we let the rate vary with $n$ so that it converges down to the asymptotic limit of $I(X;Y)$.

\begin{theorem}[Exponential convergence]
\label{theorem exponential}
    For any $Q_{X}$, $Q_{Y|X}$, and $R > I(X;Y)$, where $X$ and $Y$ have finite supports ${\cal X}$ and ${\cal Y}$, there exists a $\gamma_1 > 0$ and a $\gamma_2 > 0$ such that for $n$ large enough
    \begin{align}
        \mathbb{P} \left( \| P_{Y^n|{\cal C}} - Q_{Y^n}\|_{TV} > e^{-\gamma_1 n} \right) & \leq e^{- e^{\gamma_2 n}},
    \end{align}
    where $\| \cdot \|_{TV}$ is total variation.

    More precisely, for any $n \in \mathbb{N}$ and $\delta \in \big(0,R-I(X;Y)\big)$,
    \begin{equation}
    \mathbb{P} \left( \| P_{Y^n|{\cal C}} - Q_{Y^n}\|_{TV} > 3 \cdot 2^{-n\gamma_{\delta}} \right) \leq \big( 1 + |{\cal Y}|^n \big) e^{-\frac{1}{3} 2^{n\delta}},
    \label{EQ:soft_covering_precise}
    \end{equation}
    where
    \begin{align}
        \gamma_{\delta} &= \sup_{\alpha > 1} \frac{\alpha - 1}{2 \alpha - 1} \big(R - \delta - d_{\alpha} (Q_{X,Y},Q_X Q_Y)\big), \label{EQ:soft_covering_exponent}
    \end{align}
    and $d_{\alpha}(\cdot,\cdot)$ is the R\'{e}nyi divergence of order $\alpha$.
\end{theorem}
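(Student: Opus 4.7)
The plan is to bound $\|P_{Y^n|\mathcal{C}} - Q_{Y^n}\|_{TV}$ by truncating the random likelihood ratio $i_n(x^n,y^n) := Q_{Y^n|X^n}(y^n|x^n)/Q_{Y^n}(y^n)$ at a level $t = 2^{n(R-\delta_1)}$ with $\delta_1$ to be chosen, applying a concentration inequality to the bounded truncated part and controlling the tail via a R\'enyi moment bound. The key observation is that $P_{Y^n|\mathcal{C}}(y^n)/Q_{Y^n}(y^n) = M^{-1}\sum_{m} i_n(X^n(m),y^n)$ is an empirical mean with expectation one, so the question is how uniformly over $y^n$ this empirical mean concentrates.

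Splitting $i_n = A + B$ with $A = i_n\mathbf{1}\{i_n \le t\}$ and $B = i_n\mathbf{1}\{i_n > t\}$, and letting $\mu_A(y^n) = \mathbb{E}_{Q_{X^n}}[A(X^n,y^n)] \le 1$, I would decompose the total variation by the triangle inequality as
\begin{equation*}
\|P_{Y^n|\mathcal{C}} - Q_{Y^n}\|_{TV} \le T_1 + T_2 + T_3,
\end{equation*}
where $T_1 = \tfrac{1}{2}\sum_{y^n} Q_{Y^n}(y^n)|M^{-1}\sum_m A(X^n(m),y^n) - \mu_A(y^n)|$ is the truncated concentration term, $T_2 = \tfrac{1}{2}Q_{X^nY^n}(i_n > t)$ is a deterministic truncation bias, and $T_3 = (2M)^{-1}\sum_m g(X^n(m))$ is the random tail residual with $g(x^n) := Q_{Y^n|X^n}(i_n(x^n, Y^n) > t \mid X^n=x^n) \in [0,1]$.

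I would bound $T_2$ by Markov's inequality on the $(\alpha-1)$-th moment together with tensorization of the R\'enyi divergence, $D_\alpha(Q_{X^nY^n}\|Q_{X^n}Q_{Y^n}) = n\,d_\alpha$, giving $Q_{X^nY^n}(i_n > t) \le 2^{-n(\alpha-1)(R-\delta_1-d_\alpha)}$. Since $\mathbb{E}[g(X^n)]$ equals this same quantity and $g \in [0,1]$, Hoeffding's inequality on the $M$ i.i.d.\ summands makes $T_3 \le 2^{-n\gamma}$ fail with super-exponentially small probability. For $T_1$, for each fixed $y^n$ the summands $A(X^n(m),y^n)$ are i.i.d.\ in $[0,t]$ with mean at most $1$, so a multiplicative Chernoff bound yields
\begin{equation*}
\mathbb{P}\!\left(\left|M^{-1}\textstyle\sum_m A(X^n(m),y^n) - \mu_A(y^n)\right| > \epsilon\right) \le 2\exp\!\left(-\epsilon^2 M/(3t)\right) = 2\exp\!\left(-\epsilon^2\, 2^{n\delta_1}/3\right),
\end{equation*}
where I used $\mu_A \le 1$ in the denominator of the Chernoff exponent and $M/t = 2^{n\delta_1}$. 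A union bound over the $|\mathcal{Y}|^n$ values of $y^n$ then controls $T_1 \le \epsilon$.

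Setting $\epsilon = 2^{-n\gamma}$ and $\delta_1 = \delta + 2\gamma$ drives the concentration failure probability to $2|\mathcal{Y}|^n\exp(-2^{n\delta}/3)$, which together with the $T_3$ failure yields the $(1+|\mathcal{Y}|^n)$ prefactor in the theorem. Requiring the bias $T_2 \le 2^{-n\gamma}$ becomes $(\alpha-1)(R-\delta_1-d_\alpha) \ge \gamma$, which with $\delta_1 = \delta+2\gamma$ simplifies to $\gamma(2\alpha-1) \le (\alpha-1)(R-\delta-d_\alpha)$; taking the supremum over $\alpha > 1$ recovers $\gamma_\delta$, and the three terms sum to at most $3\cdot 2^{-n\gamma_\delta}$.

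I expect the main obstacle to be this coupled balancing: the characteristic factor $\tfrac{\alpha-1}{2\alpha-1}$ emerges only by simultaneously enforcing the concentration exponent $\delta_1 - 2\gamma = \delta$ and the bias exponent $(\alpha-1)(R-\delta_1-d_\alpha) \ge \gamma$, so one must resist optimizing them separately. A secondary technical point is that multiplicative Chernoff requires $\epsilon \le \mu_A$; this holds for $n$ large because truncation is asymptotically vacuous in expectation, i.e., $\mu_A(y^n) \to 1$ uniformly.
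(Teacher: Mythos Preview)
Your approach is essentially identical to the paper's: truncate the likelihood ratio, apply a Chernoff bound pointwise in $y^n$ to the bounded part together with a union bound over $\mathcal{Y}^n$, and control the atypical tail via a R\'enyi moment bound. The paper parameterizes the truncation threshold as $2^{n(I(X;Y)+\epsilon)}$ rather than your $2^{n(R-\delta_1)}$, and it uses the one-sided identity $\|P-Q\|_{TV}=\int dQ\,[D-1]_+$ to collapse your $T_1+T_2$ into a single upper-tail term, but the parameter balancing that produces $\tfrac{\alpha-1}{2\alpha-1}(R-\delta-d_\alpha)$ arises in exactly the same way.

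One correction to your closing remark: the claim that $\mu_A(y^n)\to 1$ uniformly in $y^n$ is false. For atypical output sequences (e.g.\ a constant sequence in an asymmetric channel) the conditional probability $Q_{X^n|Y^n=y^n}\{i_n>t\}$ need not vanish, so $\mu_A(y^n)$ can stay bounded away from one or even equal zero. Fortunately the constraint $\epsilon\le\mu_A$ is not actually needed: for the upper tail, the multiplicative Chernoff bound for $[0,t]$-valued summands with mean $\mu\le 1$ already yields $\exp(-M\eta^2/(3t))$ whenever $\eta\le 1$, with no relation between $\eta$ and $\mu$ required; and for the lower tail the event $\{\bar A<\mu_A-\eta\}$ is empty once $\eta>\mu_A$, so the two-sided bound holds regardless. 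Equivalently, adopting the paper's $[\cdot]_+$ form of total variation eliminates the lower tail altogether and the issue never arises.
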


\begin{theorem}[Second order rate]
\label{theorem second order}
    For any $Q_{X}$, $Q_{Y|X}$, and $\varepsilon \in (0,1)$, where $X$ and $Y$ have finite supports ${\cal X}$ and ${\cal Y}$, let the rate $R$ vary with $n$ as
    \begin{equation}
        R_n = I(X;Y) + \frac{1}{\sqrt{n}} {\cal Q}^{-1}(\varepsilon) \sqrt{V} + c \frac{ \log n }{n},
    \end{equation}
    where ${\cal Q}$ is one minus the standard normal cdf, $V$ is the variance of $\imath_{X;Y}(X;Y)$, and $c>2$ is arbitrary.  Then for any $d < c-1$ and for $n$ large enough,
    \begin{align}
        \mathbb{P} \left( \| P_{Y^n|{\cal C}} - Q_{Y^n}\|_{TV} > \varepsilon \right) & \leq e^{- n^d}.
    \end{align}
\end{theorem}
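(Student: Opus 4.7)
My plan is to mirror the proof strategy of Theorem~\ref{theorem exponential} but to invoke the central limit theorem in place of a R\'enyi-divergence bound to extract the constant $\varepsilon$, since $R_n - I(X;Y) = O(1/\sqrt n)$ is too small to drive exponential decay of total variation and hence Theorem~\ref{theorem exponential} cannot be applied as a black box. The key quantity is the information density $\imath_{X^n;Y^n}(x^n;y^n) = \log\bigl(Q_{Y^n|X^n}(y^n|x^n)/Q_{Y^n}(y^n)\bigr)$, which under $Q_{X^nY^n}$ is a sum of $n$ i.i.d.\ terms with mean $I(X;Y)$ and variance $V$ and therefore satisfies a Berry--Esseen bound with $O(1/\sqrt n)$ error.

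I would fix a truncation parameter $\gamma$ in the interval $(\max(2,d+1),\,c)$, which is non-empty because $c>2$ and $d<c-1$, and split the likelihood ratio $\Lambda(x^n,y^n) := Q_{Y^n|X^n}(y^n|x^n)/Q_{Y^n}(y^n)$ as $\Lambda = \Lambda_g + \Lambda_b$, where $\Lambda_g(x^n,y^n) = \Lambda(x^n,y^n)\,\mathbf{1}[\Lambda(x^n,y^n)\le\tau]$ with threshold $\tau = 2^{nR_n - \gamma\log n} = M/n^{\gamma}$. Using $\mathbb{E}_{X^n\sim Q_{X^n}}\Lambda(X^n,y^n)=1$ and the identity $\|P_{Y^n|{\cal C}}-Q_{Y^n}\|_{TV} = \tfrac12\sum_{y^n}Q_{Y^n}(y^n)\bigl|M^{-1}\sum_m\Lambda(X^n(m),y^n)-1\bigr|$, the triangle inequality yields the decomposition
\[
\|P_{Y^n|{\cal C}}-Q_{Y^n}\|_{TV} \;\le\; T_1 + T_2 + T_3,
\]
where $T_1$ is the $Q_{Y^n}$-weighted mean deviation of the truncated empirical average from its expectation, $T_2 = \tfrac{1}{2M}\sum_m Q_{Y^n|X^n=X^n(m)}\bigl(\{y^n:\Lambda(X^n(m),y^n)>\tau\}\bigr)$, and $T_3 = \tfrac12\,\mathbb{P}_{Q_{X^nY^n}}(\imath_{X^n;Y^n} > nR_n - \gamma\log n)$.

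The three terms are handled separately. Berry--Esseen on $T_3$ gives $T_3 = \tfrac12\,{\cal Q}\!\bigl({\cal Q}^{-1}(\varepsilon) + (c-\gamma)\log n/\sqrt{nV}\bigr) + O(1/\sqrt n)$, which for $\gamma < c$ sits at $\varepsilon/2 - \Theta(\log n/\sqrt n)$ once $n$ is large---this margin is what absorbs the random fluctuation. The quantity $T_2$ is an average of $M$ i.i.d.\ $[0,1]$-valued functions of $X^n(m)$ with mean exactly $T_3$, so Hoeffding's inequality concentrates $T_2$ around $T_3$ at a doubly-exponential rate in $n$. The substantive work lies in $T_1$: for each $y^n$ the summands $\Lambda_g(X^n(m),y^n)$ are i.i.d., bounded by $M/n^{\gamma}$, and have mean at most $1$, so Bernstein's inequality gives a per-$y^n$ deviation bound of the form $2\exp\bigl(-\Theta(n^{\gamma}s^2/(1+s))\bigr)$. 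Choosing $s = \Theta(\log n/\sqrt n)$ to match the Berry--Esseen margin and union-bounding over $|{\cal Y}|^n$ sequences produces a failure probability of $\exp\bigl(n\log|{\cal Y}| - \Theta(n^{\gamma-1}\log^2 n)\bigr)$, which is at most $e^{-n^d}$ for large $n$ provided $\gamma > \max(2,d+1)$.

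The main obstacle will be threading $\gamma$ through two opposing constraints. Larger $\gamma$ is needed so that the Bernstein exponent $n^{\gamma-1}\log^2 n$ simultaneously dominates both the union-bound price $n\log|{\cal Y}|$ and the target rate $n^d$, whereas smaller $\gamma$ is needed so that the Berry--Esseen-based bound on $T_3$ still sits a $\Theta(\log n/\sqrt n)$ gap below $\varepsilon/2$. The hypotheses $c>2$ and $d<c-1$ are precisely what make the window $\gamma\in(\max(2,d+1),c)$ non-empty, so they are exactly the conditions under which both constraints can be met and the overall bound closes to $\varepsilon$.
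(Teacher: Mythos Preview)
Your proposal is correct and follows essentially the same route as the paper: truncate the likelihood ratio at a threshold $\tau = 2^{nR_n}/n^{\gamma}$ (the paper writes this as $2^{n(I(X;Y)+\epsilon)}$ with $\epsilon$ chosen so that $R_n - I(X;Y) - \epsilon = (c-r)\tfrac{\log n}{n}$, i.e.\ your $\gamma$ is their $c-r$), bound the atypical probability via Berry--Esseen, concentrate the empirical atypical mass and the per-$y^n$ truncated average via Chernoff-type inequalities, and union-bound over $\mathcal{Y}^n$. The only differences are cosmetic---your three-term absolute-value split versus the paper's two-term positive-part split, Bernstein in place of multiplicative Chernoff, and deviation scale $\Theta(\log n/\sqrt n)$ versus $1/\sqrt n$---and you are in fact slightly more careful than the paper in making explicit the constraint $\gamma>2$ needed for the union bound over $|\mathcal{Y}|^n$ to be absorbed.
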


\begin{proof}[Proof of Theorem~\ref{theorem exponential}]
    We state the proof in terms of arbitrary distributions (not necessarily discrete).  When needed, we will specialize to the case that ${\cal X}$ and ${\cal Y}$ are finite.
    
    Let the Radon-Nikodym derivative between the induced and desired distributions be denoted as
    \begin{align}
        D_{\cal C}(y^n) &\triangleq \frac{d P_{Y^n|{\cal C}}}{d Q_{Y^n}}(y^n).
    \end{align}
    In the discrete case, this is just a ratio of probability mass functions.
    
    Notice that the total variation of interest, which is a function of the codebook ${\cal C}$, is given by
    \begin{align}
        \| P_{Y^n|{\cal C}} - Q_{Y^n}\|_{TV} &= \frac{1}{2} \int d Q_{Y^n} | D_{\cal C} - 1 | \\
        &= \int d Q_{Y^n} [ D_{\cal C} - 1]_+,
    \end{align}
    where $[z]_+ = \max\{z,0\}$.
    
    Define the jointly-typical set over $x$ and $y$ sequences by
    \begin{align}
        {\cal A}_{\epsilon} &\triangleq \left\{ (x^n, y^n) : \frac{1}{n} \log \frac{d Q_{Y^n|X^n=x^n}}{d Q_{Y^n}} (y^n) \leq I(X;Y) + \epsilon \right\}.
    \end{align}
    
    We split $P_{Y^n|{\cal C}}$ into two parts, making use of the indicator function denoted by $\mathbf{1}$.  Let $\epsilon>0$ be arbitrary, to be determined later.
    \begin{align}
        P_{{\cal C}, 1} &\triangleq 2^{-nR} \sum_{x^n(m) \in {\cal C}} Q_{Y^n|X^n=x^n(m)} \mathbf{1}_{(Y^n,x^n(m)) \in {\cal A}_{\epsilon}}, \\
        P_{{\cal C}, 2} &\triangleq 2^{-nR} \sum_{x^n(m) \in {\cal C}} Q_{Y^n|X^n=x^n(m)} \mathbf{1}_{(Y^n,x^n(m)) \notin {\cal A}_{\epsilon}}.
    \end{align}
    The measures $P_{{\cal C}, 1}$ and $P_{{\cal C}, 2}$ on the space ${\cal Y}^n$ are not probability measures, but $P_{{\cal C}, 1} + P_{{\cal C}, 2} = P_{Y^n|{\cal C}}$ for each codebook ${\cal C}$.
    
    Let us also split $D_{\cal C}$ into two parts:
    \begin{align}
        D_{{\cal C}, 1}(y^n) &\triangleq \frac{d P_{{\cal C}, 1}}{d Q_{Y^n}}(y^n), \\
        D_{{\cal C}, 2}(y^n) &\triangleq \frac{d P_{{\cal C}, 2}}{d Q_{Y^n}}(y^n).
    \end{align}
    
    This allows us also to bound the total variation by a sum of two terms:
    \begin{align}
        \| P_{Y^n|{\cal C}} - Q_{Y^n}\|_{TV} &\leq \int d Q_{Y^n} [ D_{{\cal C},1} - 1]_+ + \int d Q_{Y_n} D_{{\cal C},2} \\
        &= \int d Q_{Y^n} [ D_{{\cal C},1} - 1]_+ + \int d P_{{\cal C},2}. \label{expanded total variation bound}
    \end{align}
    
    Notice that $P_{{\cal C}, 1}$ will usually contain almost all of the probability.  That is, denoting the complement of ${\cal A}_{\epsilon}$ as $\overline{{\cal A}_{\epsilon}}$,
    \begin{align}
        \int d P_{{\cal C}, 2} &= 1 - \int d P_{{\cal C}, 1} \\
        &= 2^{-nR} \sum_{x^n(m) \in {\cal C}} \mathbb{P}_Q \left( \overline{{\cal A}_{\epsilon}} \; \middle| \; X^n = x^n(m, {\cal C}) \right).
    \end{align}
    This is an average of exponentially many i.i.d. random variables bounded between 0 and 1.  Furthermore, the expected value of each one is the exponentially small probability of correlated sequences being atypical:
    \begin{align}
        \mathbb{E} \; \mathbb{P}_Q \left( \overline{{\cal A}_{\epsilon}} \; \middle| \; X^n = x^n(m, {\cal C}) \right) &= \mathbb{P}_Q \left( \overline{{\cal A}_{\epsilon}} \right) \\
        &\leq 2^{-\beta n}, \label{atypical probability bound}
    \end{align}
    where
    \begin{align}
        \beta &= (\alpha - 1) \left( I(X;Y) + \epsilon - d_{\alpha}(Q_{X,Y}, Q_X Q_Y) \right)
    \end{align}
    for any $\alpha>1$, where $d_{\alpha}(\cdot,\cdot)$ is the R\'{e}nyi divergence of order $\alpha$.  Here we use the finiteness of ${\cal X}$ and ${\cal Y}$ to assure that the R\'{e}nyi divergence is finite and continuous for all $\alpha$, which likewise assures a choice of $\alpha$ can be found to give a positive value of $\beta$ if $\epsilon$ is small enough.  We use units of bits for mutual information and R\'{e}nyi divergence to coincide with the base two expression of rate.
    
    Therefore, the Chernoff bound assures that $\int d P_{{\cal C}, 2}$ is exponentially small.  That is,
    \begin{align}
        \mathbb{P} \left( \int d P_{{\cal C}, 2} \geq 2 \cdot 2^{-\beta n} \right) &\leq e^{-\frac{1}{3} 2^{n( R - \beta)}}.
        \label{chernoff for atypical}
    \end{align}
    
    Similarly, $D_{{\cal C}, 1}$ is an average of exponentially many i.i.d. and uniformly bounded functions, each one determined by one sequence in the codebook:
    \begin{align}
        D_{{\cal C}, 1}(y^n) &= 2^{-nR} \sum_{x^n(m) \in {\cal C}} \frac{d Q_{Y^n|X^n=x^n(m)}}{d Q_{Y^n}} (y^n) \mathbf{1}_{(y^n,x^n(m)) \in {\cal A}_{\epsilon}}
    \end{align}
    For every term in the average, the indicator function bounds the value to be between $0$ and $2^{nI(X;Y) + n \epsilon}$.
    The expected value of each term with respect to the codebook is bounded above by one, which is observed by removing the indicator function.
    Therefore, the Chernoff bound assures that $D_{{\cal C},1}$ is exponentially close to one for every $y^n$.  For any $\beta_2>0$:
    \begin{align}
        \mathbb{P} \left( D_{{\cal C},1}(y^n) \geq 1 + 2^{-\beta_2 n} \right) &\leq e^{-\frac{1}{3} 2^{n( R - I(X;Y) - \epsilon - 2 \beta_2 )}} \quad \forall y^n. \label{typical set bound}
    \end{align}
    This use of the Chernoff bound has been used before for a soft-covering lemma in the proof of Lemma~9 of \cite{ahlswede-winter02}.
    
    At this point we will use the fact that ${\cal Y}$ is a finite set.  We use the union bound applied to \eqref{atypical probability bound} and \eqref{typical set bound}, taking advantage of the fact that the space ${\cal Y}^n$ is only exponentially large.  Let ${\cal S}$ be the set of codebooks such that the following are true:
    \begin{align}
    \int d P_{{\cal C},2} &< 2 \cdot 2^{-\beta n}, \\
    D_{{\cal C}, 1}(y^n) &< 1 + 2^{-\beta_2 n} \quad \forall y^n \in {\cal Y}^n.
    \end{align}
    We see that the probability of not being in ${\cal S}$ is doubly exponentially small:
    \begin{align}
    \mathbb{P}({\cal C} \notin {\cal S}) &\leq e^{-\frac{1}{3} 2^{n (R - \beta)}} + |{\cal Y}|^n e^{-\frac{1}{3} 2^{n(R - I(X;Y) - \epsilon - 2 \beta_2)}}.
    \end{align}
    
    What remains is to show that for every codebook in ${\cal S}$, the total variation is exponentially small.  From \eqref{expanded total variation bound} it follows that
    \begin{align}
        \| P_{Y^n|{\cal C}} - Q_{Y^n}\|_{TV} &\leq 2 \cdot 2^{-\beta n} + 2^{-\beta_2 n}.
    \end{align}
    
    Finally, we carefully select $\epsilon$, $\beta_1$ and $\beta_2$ to give the tightest exponential bound, in terms of $\delta$ from the theorem statement:
    \begin{align}
        \epsilon_{\alpha,\delta} &= \frac{ \frac{1}{2} (R-\delta) + (\alpha - 1) d_{\alpha} (Q_{X,Y},Q_X Q_Y) }{ \frac{1}{2} + (\alpha- 1) } - I(X;Y),
        \label{EQ:optimized_epsilon} \\
        \beta_2 &= \beta.
    \end{align}
    This gives the desired result.
\end{proof}

\begin{proof}[Proof of Theorem~\ref{theorem second order}]
    For the analysis of the second-order rate, we use many of the same steps as the proof of Theorem~\ref{theorem exponential}.  Assume all of the same definitions.
    
    The key difference is the bound on $\mathbb{P}_Q \left( \overline{{\cal A}_{\epsilon}} \right)$ found in \eqref{atypical probability bound}.  Instead of using the Chernoff bound we will use the Berry-Esseen theorem.
    \begin{align}
        \mathbb{P}_Q \left( \overline{{\cal A}_{\epsilon}} \right) &\leq {\cal Q} \left( \frac{\epsilon \sqrt{n}}{\sqrt{V}} \right) + \frac{\rho}{V^{3/2} \sqrt{n}},
    \end{align}
    where $\rho = \mathbb{E} | \imath_{X;Y}(X;Y) - I(X;Y) |^3 \leq \infty$ because ${\cal X}$ and ${\cal Y}$ are finite.
    
    Now choose $r \in (0, c - d - 1)$, where $c$ and $d$ are from the theorem statement, and let
    \begin{equation}
        \epsilon = \frac{1}{\sqrt{n}} {\cal Q}^{-1}(\varepsilon) \sqrt{V} + r \frac{ \log n }{n}.
    \end{equation}

    The bound on $\mathbb{P}_Q \left( \overline{{\cal A}_{\epsilon}} \right)$ becomes
    \begin{equation}
        \mu_n \triangleq {\cal Q} \left( {\cal Q}(\varepsilon) + \frac{r}{\sqrt{V}} \frac{\log n}{\sqrt{n}} \right) + \frac{\rho}{V^{3/2} \sqrt{n}}.
    \end{equation}
    
    Now, in a step analogous to \eqref{chernoff for atypical}, again using the Chernoff bound,
    \begin{align}
        \mathbb{P} \left( \int d P_{{\cal C}, 2} \geq \mu_n \left( 1 + \frac{1}{\sqrt{n}} \right) \right) &\leq e^{-\frac{\mu_n}{3n} 2^{nR}}.
    \end{align}
    
    Also, in a step analogous to \eqref{typical set bound}, we use the Chernoff bound to obtain the following:
    \begin{align}
        \mathbb{P} \left( D_{{\cal C},1}(y^n) \geq 1 + \frac{1}{\sqrt{n}} \right) &\leq e^{-\frac{1}{3n} 2^{n( R - I(X;Y) - \epsilon )}} \quad \forall v^n \\
        &= e^{-\frac{1}{3n} 2^{n \left( (c-r) \frac{\log n}{n} \right)}} \\
        &= e^{-\frac{1}{3} n^{c-r-1}}.
    \end{align}

    At this point we will use the fact that ${\cal Y}$ is a finite set to apply the union bound.  Let $\bar{\cal S}$ be the set of codebooks such that the following are true:
    \begin{align}
    \int d P_{{\cal C},2} &< \mu_n \left( 1 + \frac{1}{\sqrt{n}} \right), \\
    D_{{\cal C}, 1}(y^n) &< 1 + \frac{1}{\sqrt{n}} \quad \forall y^n \in {\cal Y}^n.
    \end{align}
    The probability of not being in $\bar{\cal S}$ is super-exponentially small:
    \begin{align}
    \mathbb{P}({\cal C} \notin \bar{\cal S}) &\leq e^{-\frac{\mu_n}{3n} 2^{nR}} + |{\cal Y}|^n e^{-\frac{1}{3} n^{c-r-1}}.
    \end{align}
    Notice that $\mu_n$ converges to $\varepsilon$, so the above probability is dominated by the second term.  Since $d < c-r-1$, this establishes the probability statement of the theorem.
    
    What remains is to show that for every codebook in $\bar{\cal S}$, the total variation is eventually less than $\varepsilon$.  From \eqref{expanded total variation bound} it follows that
    \begin{align}
        \| P_{Y^n|{\cal C}} - Q_{Y^n}\|_{TV} &\leq \mu_n \left( 1 + \frac{1}{\sqrt{n}} \right) + \frac{1}{\sqrt{n}}.
    \end{align}
    As observed previously, $\mu_n$ converges to $\varepsilon$.  Furthermore, it converges from below and the deviation is of order $\frac{\log n}{\sqrt{n}}$, which dominates the other terms in the total variation bound.
\end{proof}

\section{Applications}
\label{section applications}

As stated in \cite{cuff15}, these stronger versions of Wyner's soft-covering lemma have important applications, particularly to information theoretic security.  The main advantage of these theorems come from the union bound.

The usual random coding argument for information theory uses a randomly generated codebook until the final steps of the achievability proof.  In these final steps, it is claimed that there exists a good codebook based on the analysis.  This can be done by analyzing the expected value of the performance for the random ensamble and claiming that at least one codebook is as good as the expected value.  Alternatively, one can make the argument based on the probability that the randomly generated codebook has a good performance.  If that probability is greater than zero, then there is at least one good codebook.  The second approach can be advantageous when performance is not captured by one scalar value that is easily analyzed---for example, if ``good'' performance involves a collection of constraints.

These stronger soft-covering theorems give a very strong assurance that soft-covering will hold.  Even if the codebook needs to satisfy exponentially many constraints related to soft-covering, the union bound will yield the claim that a codebook exists which satisfies them all simultaneously.  Indeed, if you ran the soft-covering experiment exponentially many times, regardless of how the codebooks are correlated from one experiment to the next, the probability of seeing even one fail is still super-exponentially small.

We have shown in \cite{goldfeld-cuff-permuter15} that high probability theorems for soft covering can be used to tackle previously challenging problems in secure communication where precisely this need arises.  That work demonstrates that in the wiretap channel of type II \cite{ozarow-wyner84}, where an adversary can influence the channel with an exponential number of possible actions, a random codebook will achieve secrecy for all of them simultaneously.  Furthermore, ``semantic security,'' which is a very practical notion of secrecy but is stronger than the secrecy typically guaranteed in information theory, requires security to hold even for the most distinguishable pair of messages.  This level of secrecy is shown to be achieved, again using the union bound and the super-exponential assurance of soft covering.

\section*{Acknowledgment}

This work was supported by the National Science Foundation (grant CCF-1350595) and the Air Force Office of Scientific Research (grant FA9550-15-1-0180).


\begin{thebibliography}{1}

\bibitem{wyner-common-info}
A. Wyner, ``The common information of two dependent random variables,'' {\em IEEE Trans. Inf. Theory}, 21(2): 163-79, March 1975.

\bibitem{bloch-laneman13}
M. Bloch and N. Laneman, ``Strong secrecy from channel resolvability,'' {\em IEEE Trans. Inf. Theory}, 59(12): 8077-8098, Dec. 2013.

\bibitem{wyner-wiretap}
A. Wyner, ``The wire-tap channel,'' {\em Bell Systems Technical Journal}, 54(8): 1334-87, Oct. 1975.

\bibitem{han-verdu}
T. Han and S. Verd\'{u}, ``Approximation theory of output statistics,'' {\em IEEE Trans. Inf. Theory}, 39(3): 752-72, May 1993.

\bibitem{hayashi06}
M. Hayashi, ``General nonasymptotic and asymptotic formulas in channel resolvability and identification capacity and their application to the wiretap channel,'' {\em IEEE Trans. Inf. Theory}, 52(4): 1562-75, April 2006.

\bibitem{cuff13}
P. Cuff, ``Distributed channel synthesis,'' {\em IEEE Trans. Inf. Theory}, 59(11): 7071-96, Nov. 2013.

\bibitem{ahlswede-winter02}
R. Ahlswede and A. Winter, ``Strong converse for identification via quantum channels,'' {\em IEEE Trans. Inf. Theory}, 48(3):  569-79, March 2002.

\bibitem{wilde-text}
M. Wilde, ``Quantum information theory,'' {\em Cambridge University Press}, 2013.

\bibitem{winter05}
A. Winter, ``Secret, public and quantum correlation cost of triples of random variables,'' {\em Proc. of IEEE Int'l. Symp. Inf. Theory}, Sept. 2005.

\bibitem{hou-kramer14}
J. Hou and G. Kramer, ``Effective secrecy:  Reliability, confusion and stealth,'' {\em Proc. of IEEE Int'l. Symp. Inf. Theory}, July 2014.

\bibitem{bennett-etal14}
C. Bennett, I. Devetak, A. Harrow, P. Shor, A. Winter. ``The Quantum Reverse Shannon Theorem and Resource Tradeoffs for Simulating Quantum Channels.'' {\em IEEE Trans. Inf. Theory}, 60(5), 2926-59, May 2014.

\bibitem{ozarow-wyner84}
L. Ozarow and A. Wyner, ''Wire-tap channel II,'' {\em Bell Systems Technical Journal}, 63(10): 2135-57, Dec. 1984. 

\bibitem{watanabe-hayashi14}
S. Watanabe and M. Hayashi, ``Strong converse and second-order asymptotics of channel resolvability,'' {\em Proc. IEEE Int'l. Symp. Inf. Theory}, July, 2014.

\bibitem{cuff15}
P. Cuff. ``A Stronger Soft-Covering Lemma and Applications.'' {\em Proc. CNS Workshop on Physical-layer Methods for Wireless Security}, Sept. 2015.

\bibitem{goldfeld-cuff-permuter15}
Z. Goldfeld, P. Cuff, H. Permuter. ``Semantic-Security Capacity for Wiretap Channels of Type II.'' {\em CoRR:abs/1509.03619}, 2015.

\end{thebibliography}
\end{document}